\documentclass[11pt]{article}

\usepackage[utf8]{inputenc}
\usepackage[T1]{fontenc}
\usepackage{lmodern}
\usepackage[a4paper,margin=1in]{geometry}
\usepackage{amsmath,amssymb,amsthm}
\usepackage{bm}
\usepackage{graphicx}
\usepackage{xcolor}
\usepackage{booktabs}
\usepackage{array}
\usepackage{multirow}
\usepackage{tabularx}
\usepackage{subcaption}
\usepackage{algorithm}
\usepackage{algpseudocode}
\usepackage{enumitem}
\usepackage[title]{appendix}
\usepackage[authoryear,round]{natbib}
\usepackage{microtype}
\usepackage{hyperref}
\hypersetup{hidelinks}

\theoremstyle{plain}
\newtheorem{prediction}{Prediction}
\newtheorem{lemma}{Lemma}

\newcommand{\Xup}{\ensuremath{\overline{X}}}
\newcommand{\Xlow}{\ensuremath{\underline{X}}}
\newcommand{\Yup}{\ensuremath{\overline{Y}}}
\newcommand{\Ylow}{\ensuremath{\underline{Y}}}
\newcommand{\Ssim}{\ensuremath{S}}
\newcommand{\Sdis}{\ensuremath{\overline{S}}}
\newcommand{\Lset}{\ensuremath{\mathcal{L}}}
\newcommand{\Cset}{\ensuremath{\mathcal{C}}}
\newcommand{\Rpref}{\ensuremath{R_{\mathrm{pref}}}}
\newcommand{\Rorth}{\ensuremath{R_{\mathrm{orth}}}}
\newcommand{\xtilde}{\tilde{x}}
\DeclareMathOperator*{\argmax}{arg\,max}

\newcommand{\figplaceholder}[2]{%
  \fbox{\begin{minipage}[c][#1][c]{0.92\linewidth}
    \centering\itshape\small\color{gray}
    [\,FIGURE PLACEHOLDER --- replace with image\,]\\[2pt]
    #2
  \end{minipage}}%
}

\newcommand{\safeincludegraphics}[3]{%
  \IfFileExists{#2}{%
    \includegraphics[#1]{#2}%
  }{%
    \figplaceholder{#3}{Missing file: \texttt{\detokenize{#2}}}%
  }%
}

\begin{document}

%------------------------------------------------------------------------------
% Title & Authors Block (Standalone Article Format)
%------------------------------------------------------------------------------
\title{A Bayes--Markov Neuromorphic Model of Cortical Orientation Selectivity: A Computational Re-implementation and Quantitative Simulation Study}

\author{
Abolfazl Moslemi\thanks{Corresponding author: \texttt{abolfazl.moslemi14@sharif.edu}}\\
Department of Mathematical Sciences, Sharif University of Technology, Tehran, Iran
\and
Milad Sarabadani\\
School of Mathematics, Statistics, and Computer Science, Faculty of Science,\\
University of Tehran, Tehran, Iran
\and
Fatemeh Sefidian\\
Department of Psychology, Islamic Azad University, Tehran, Iran
\and
Hossein Peyvandi\\
Pasargad Institute for Advanced Innovative Solutions (PIAIS), Tehran, Iran
}

\date{}

\maketitle

%------------------------------------------------------------------------------
% Abstract
%------------------------------------------------------------------------------
\begin{abstract}
The emergence of orientation selectivity in the primary visual cortex (V1) remains a central question in computational neuroscience. Shirazi's Bayes--Markov model proposed a probabilistic explanation for how orientation-selective inhibition can arise from non-oriented lateral geniculate nucleus (LGN) inputs through local inference. In that formulation, the activity pattern of striate cortical inhibitory (SCI) cells is estimated from the LGN activity pattern by a maximum a posteriori (MAP) criterion over a two-layer hierarchical Markov random field, and the resulting inference is implemented through a local parallel relaxation algorithm. We provide a computationally explicit re-implementation and quantitative simulation study of this framework. We reconstruct the mathematical model, describe its fully LGN-driven update rule, and implement a vectorized simulation framework that preserves the original local clique operations while making systematic parameter sweeps feasible. We evaluate the model using orientation tuning curves, an orientation selectivity index (OSI), controlled LGN noise perturbations, contrast tests, and model-variant comparisons. We further add a spiking SCI-layer realization using leaky integrate-and-fire and Hodgkin--Huxley neurons to examine whether the rate-coded SCI field can be expressed through temporally explicit neural activity. The simulations support the central qualitative behavior of the Bayes--Markov framework: sharp orientation selectivity, robustness to moderate LGN noise, and a biologically interpretable proof-of-concept spiking realization of the inferred inhibitory field.
\end{abstract}

\noindent\textbf{Keywords:} Primary visual cortex; Orientation selectivity; Markov random fields; Bayesian inference; Neuromorphic modeling; Spiking neurons

%==============================================================================
\section{Introduction}
\label{sec:intro}
%==============================================================================

Ever since the seminal observations of Hubel and Wiesel \citep{hubel1959,hubel1962,hubel1977}, orientation selectivity in the primary visual cortex has been a central problem in visual neuroscience and cortical computation \citep{marr1982,ferster2000,pugh2000}. Their classical feedforward account proposed that cortical orientation tuning arises from the spatial alignment of thalamocortical afferents. In this geometric view, a cortical neuron becomes selective for an oriented stimulus because it receives excitatory input from LGN cells whose receptive fields are arranged along the preferred orientation.

Although this explanation is elegant, physiological and computational studies have shown that feedforward geometry alone does not account for every observed property of V1 tuning. Contrast-invariant tuning, recurrent modulation, normalization, and inhibitory interactions have therefore motivated models in which local excitation and inhibition participate in the formation, amplification, stabilization, or sharpening of orientation-selective responses \citep{sclar1982,hata1988,benyishai1995,somers1995,troyer1998,marino2005,palmer2007,carandini2012}.

This leads to a central bootstrapping problem. If orientation-specific inhibition contributes to cortical orientation selectivity, how can the inhibitory network itself become orientation selective when its thalamic inputs are assumed to be non-oriented? \citet{shirazi2004} addressed this problem by formulating a Bayes--Markov computational model in which orientation-selective inhibition emerges from local cooperative inference over non-oriented LGN activity. The model represents the LGN and SCI layers as two coupled random fields and estimates the SCI activity pattern from the observed LGN pattern using a MAP criterion. The resulting computation is performed by a deterministic, local, parallel relaxation algorithm grounded in established Markov-random-field methods \citep{besag1974,geman1984,besag1986}.

The present study revisits this Bayes--Markov model in a computationally explicit form. The original probabilistic formulation and biological interpretation are preserved, while the implementation, numerical evaluation, and spiking readout are made systematic and reproducible. The contribution is therefore a quantitative computational reassessment of the original framework rather than a replacement biological theory.

\paragraph{Contributions of the present study.}
The main contributions of this work are as follows:
\begin{enumerate}[leftmargin=1.8em,itemsep=3pt]
    \item We reconstruct the Bayes--Markov formulation of Shirazi's model, including the two-layer HMRF structure, the MAP objective, and the local GICM-style relaxation rule.
    \item We describe the fully LGN-driven version of the model, emphasizing how the original formulation avoids biologically implausible global normalization and fixed reference signals.
    \item We provide a vectorized computational implementation that preserves the original local clique-energy operations while making systematic simulation studies feasible.
    \item We add a quantitative layer to the model's previously qualitative evaluation: orientation tuning curves with an orientation selectivity index (OSI), contrast invariance, and robustness to LGN noise across all model variants.
    \item We extend the simulation study with a spiking SCI-layer implementation using LIF and Hodgkin--Huxley neurons, showing that the inferred orientation-selective field can be expressed through temporally explicit spike trains.
\end{enumerate}

%==============================================================================
\section{Related Models of Cortical Orientation Selectivity}
\label{sec:related}
%==============================================================================

A wide range of models has been proposed to explain orientation selectivity in V1. Classical feedforward models attribute tuning to the spatial arrangement of LGN receptive fields, whereas recurrent models emphasize local excitation, inhibition, balance, and gain control \citep{ferster2000,ursino2004,marino2005}. More recent experimental and computational work has continued to examine how synaptic organization, random or structured recurrent connectivity, and layered spiking dynamics shape selectivity \citep{sadeh2014,sadeh2015,wilson2016,merkt2019,rossi2020,chariker2020}.

\paragraph{Feedforward and geometric models.}
The classical Hubel--Wiesel model explains orientation selectivity through the alignment of thalamic afferents \citep{hubel1962}. Related geometric models provide simple and intuitive mechanisms for generating orientation-biased responses. However, threshold nonlinearities can produce an iceberg effect in which apparent tuning width changes with stimulus contrast. This is important because V1 orientation tuning is often approximately contrast invariant \citep{sclar1982,troyer1998,ferster2000,palmer2007}.

\paragraph{Recurrent cortical models.}
Recurrent models emphasize the role of intracortical circuitry in shaping and stabilizing orientation tuning. In such models, orientation selectivity is amplified, sharpened, balanced, or stabilized through interactions among cortical neurons \citep{benyishai1995,somers1995,marino2005,sadeh2015}. These approaches account for several response properties but differ in the assumed structure and functional role of excitation and inhibition \citep{ursino2004,carandini2012}. Experimental studies likewise indicate that orientation selectivity depends on the organization of synaptic and recurrent inputs rather than on a single universal circuit motif \citep{wilson2016,rossi2020}.

\paragraph{Bayesian and probabilistic accounts.}
Probabilistic inference has also been used as a framework for interpreting cortical computation, including predictive and hierarchical processing in the visual system \citep{rao1999,lee2003}. These accounts provide a broader computational context for local MAP-based models in which hidden neural states are inferred from observed sensory activity.

\paragraph{Bayes--Markov model.}
The Bayes--Markov model of \citet{shirazi2004} is distinctive because it addresses the bootstrapping problem directly. It asks how orientation-selective inhibition itself can emerge from non-oriented LGN inputs. The model treats the SCI layer as a hidden random field and the LGN layer as an observed random field. The cortical circuit is interpreted as performing local Bayesian inference to estimate the hidden SCI configuration from the observed LGN pattern.

\begin{table}[htpb]
\centering
\caption{Conceptual comparison of selected model classes for cortical orientation selectivity.}
\label{tab:model_comparison_conceptual}
\renewcommand{\arraystretch}{1.2}
\begin{tabularx}{\textwidth}{@{}p{0.23\textwidth}X p{0.18\textwidth}p{0.18\textwidth}@{}}
\toprule
\textbf{Model class} & \textbf{Main mechanism} & \textbf{Intracortical role} & \textbf{Bootstrapping addressed?} \\
\midrule
Feedforward geometric models & Aligned LGN inputs & Limited & No \\
Recurrent cortical models & Local excitation/inhibition & Strong & Partly \\
Balanced random networks & Dynamic balance & Strong & Partly \\
Bayes--Markov model & Local MAP inference & Essential & Yes \\
\bottomrule
\end{tabularx}
\end{table}

%==============================================================================
\section{Probabilistic Framework}
\label{sec:framework}
%==============================================================================

We now summarize the probabilistic formulation of the Bayes--Markov model. The purpose of this section is not to introduce a new mathematical model, but to restate the original formulation in a notation suitable for computational implementation.

\subsection{Problem Formulation}

Let
\[
\Lset=\{(i,j): i,j=1,2,\dots,N\}
\]
denote the set of sites on an $N\times N$ lattice. The LGN layer and SCI layer are assumed to be topographically aligned on this lattice. Let
\[
Y_{\Lset}=\{Y_{ij}:(i,j)\in\Lset\}
\]
denote the LGN random field and
\[
X_{\Lset}=\{X_{ij}:(i,j)\in\Lset\}
\]
denote the SCI random field. The LGN cells are assumed to be non-oriented, whereas the SCI cells are assumed to be tuned to a particular orientation, for example horizontal orientation.

Each SCI variable takes one of two states,
\[
X_{ij}\in Q^X_{ij}=\{\Xlow,\Xup\},
\]
where $\Xlow$ denotes the less active or silent state and $\Xup$ denotes the active state. Similarly, the LGN activity is represented by a discrete local state space,
\[
Y_{ij}\in Q^Y_{ij}=\{\Ylow_{ij},\Yup_{ij}\}.
\]
The values $\Ylow_{ij}$ and $\Yup_{ij}$ may vary from site to site, reflecting the fact that LGN firing-rate ranges depend on the local visual input.

The computational goal is to estimate the SCI activity pattern from the observed LGN activity pattern. For a horizontally tuned SCI layer, the desired computation is a spatial filtering operation: horizontal line-like LGN patterns should activate corresponding SCI cells, while dots, blobs, and non-horizontal structures should be suppressed.

%------------------------------------------------------------------------------
\subsection{Hierarchical Markov Random Field Model}
%------------------------------------------------------------------------------

The joint random field $(X_{\Lset},Y_{\Lset})$ is modeled through a two-layer hierarchical Markov random field. The SCI layer is treated as a hidden MRF, and the LGN layer is treated as an observed field whose conditional distribution depends on the SCI state, following the Gibbs--MRF formulation used in spatial statistical modeling and image inference \citep{besag1974,geman1984,noda1995}.

The prior distribution of the SCI field is written as a Gibbs distribution,
\begin{equation}
P(x_{\Lset})
=
Z_X^{-1}
\exp\left[
-
\sum_{(i,j)\in\Lset}
\sum_{C\in\Cset^{X}_{ij}}
E(x_C)
\right],
\label{eq:gibbs_prior}
\end{equation}
where $\Cset^X_{ij}$ is the set of local SCI cliques associated with site $(i,j)$, $E(x_C)$ is the clique energy, and
\begin{equation}
Z_X
=
\sum_{x_{\Lset}\in\Omega_X}
\exp\left[
-
\sum_{(i,j)\in\Lset}
\sum_{C\in\Cset^{X}_{ij}}
E(x_C)
\right]
\label{eq:partition_x}
\end{equation}
is the partition function.

The conditional distribution of the LGN field given the SCI field is similarly written as
\begin{equation}
P(y_{\Lset}\mid x_{\Lset})
=
Z_{Y\mid X}^{-1}
\exp\left[
-
\sum_{(i,j)\in\Lset}
\sum_{C\in\Cset^{Y}_{ij}}
E(y_C\mid x_{ij})
\right],
\label{eq:conditional_lgn}
\end{equation}
where $\Cset^Y_{ij}$ is the set of LGN cliques associated with site $(i,j)$.

The hidden SCI field encourages spatially coherent active regions, while the conditional LGN field encodes orientation-specific local structure. Thus, orientation selectivity is not imposed by a rigid feedforward alignment of LGN inputs; it emerges from the local energy structure used to infer the hidden SCI configuration.

%------------------------------------------------------------------------------
\subsection{MAP Objective}
%------------------------------------------------------------------------------

Given an observed LGN pattern $y_{\Lset}$, the MAP estimate of the SCI field is
\begin{equation}
x_{\Lset}^{\ast}
=
\argmax_{x_{\Lset}\in\Omega_X}
P(x_{\Lset}\mid y_{\Lset}).
\label{eq:map1}
\end{equation}
Using Bayes' rule and ignoring the evidence term $P(y_{\Lset})$, this is equivalent to
\begin{equation}
x_{\Lset}^{\ast}
=
\argmax_{x_{\Lset}\in\Omega_X}
P(y_{\Lset}\mid x_{\Lset})P(x_{\Lset}).
\label{eq:map2}
\end{equation}
Because the state space has size $2^{N^2}$ for a binary $N\times N$ SCI lattice, exhaustive global optimization is not biologically or computationally feasible. The original model therefore uses a local parallel relaxation algorithm to obtain a suboptimal but practical MAP estimate, closely related to iterated conditional modes and deterministic HMRF relaxation methods \citep{besag1986,shirazi1993}.

%------------------------------------------------------------------------------
\subsection{Local Parallel Relaxation}
%------------------------------------------------------------------------------

The local update rule is a generalized iterative conditional modes procedure \citep{besag1986}. At iteration $n+1$, each SCI site is updated by maximizing a local objective:
\begin{align}
\hat{x}_{ij}(n+1)
=
\argmax_{x_{ij}\in Q^X_{ij}}
\Bigg\{
&-
\sum_{C\in\Cset^Y_{ij}}
E(y_{ij},y_{C'}\mid x_{ij})
\nonumber\\
&-
\ln
\sum_{k\in Q^Y_{ij}}
\exp\left[
-
\sum_{C\in\Cset^Y_{ij}}
E(k,y_{C'}\mid x_{ij})
\right]
\nonumber\\
&-
\sum_{C\in\Cset^X_{ij}}
E(x_{ij},\hat{x}_{C'}(n))
\Bigg\},
\label{eq:gicm}
\end{align}
where $C'=C-\{(i,j)\}$ denotes the neighboring site or sites in clique $C$ excluding the current site. This rule is local because each update depends only on the LGN and SCI states in a small neighborhood around $(i,j)$.

%==============================================================================
\section{Fully LGN-Driven Bayes--Markov Model}
\label{sec:model}
%==============================================================================

Although Eq.~\eqref{eq:gicm} provides a computational solution to the MAP problem, it contains operations that are difficult to interpret as biologically plausible. In particular, the logarithmic normalizer requires knowledge of possible LGN activity levels and involves a nonlinear log-sum-exp operation. \citet{shirazi2004} therefore transformed the update into a fully LGN-driven and more physiologically plausible form.

In the fully LGN-driven version, the update rule can be written schematically as
\begin{align}
\xtilde_{ij}(n+1)
=
\argmax_{x_{ij}\in\{\Xlow,\Xup\}}
\Bigg\{
&-
\sum_{C\in\Cset^Y_{ij}}
E(y_{ij},y_{C'}\mid x_{ij};\beta,T)
\nonumber\\
&-
\max
\left\{
-
\sum_{C\in\Cset^Y_{ij}}
E(y_{ij},y_{C'}\mid x_{ij};\beta,T),
\;
-
\sum_{C\in\Cset^Y_{ij}}
\overline{E}(y_{ij},y_{C'}\mid x_{ij};\beta,T)
\right\}
\nonumber\\
&-
\sum_{C\in\Cset^X_{ij}}
E(x_{ij},\xtilde_{C'}(n);\gamma(n))
\Bigg\},
\label{eq:update}
\end{align}
where the complementary energy is
\[
\overline{E}(y_{ij},y_{C'}\mid x_{ij};\beta,T)
=
-
E(y_{ij},y_{C'}\mid x_{ij};\beta,T).
\]
The parameter $\beta>0$ controls the strength of the LGN clique energies, $T$ is a local similarity threshold, and $\gamma(n)$ controls the recurrent SCI coupling strength at iteration $n$.

A common choice for the time-varying SCI coupling is
\begin{equation}
\gamma(n)=\bar{\gamma}\tanh(n),
\label{eq:gamma_schedule}
\end{equation}
which allows the model to begin with weak recurrent coupling and then gradually enforce local SCI consistency.

%------------------------------------------------------------------------------
\subsection{Orientation-Specific LGN Clique Energies}
%------------------------------------------------------------------------------

To define a horizontally tuned model, the local clique set is divided into horizontal and non-horizontal doubleton cliques. Let $\Cset^{H,Y}_{ij}$ denote the horizontal LGN cliques and let $\Cset^{\overline{H},Y}_{ij}$ denote the non-horizontal LGN cliques.

For a horizontal doubleton clique $C\in\Cset^{H,Y}_{ij}$, the conditional energy is defined as
\begin{align}
E(y_{ij},y_{C'}\mid \Xup;\beta,T)
&=
-\beta\,\Ssim(y_{ij},y_{C'};T)
+
\beta\,\Sdis(y_{ij},y_{C'};T),
\label{eq:horizontal_active}
\\
E(y_{ij},y_{C'}\mid \Xlow;\beta,T)
&=
\beta\,\Ssim(y_{ij},y_{C'};T)
-
\beta\,\Sdis(y_{ij},y_{C'};T).
\label{eq:horizontal_inactive}
\end{align}
For a non-horizontal clique $C\in\Cset^{\overline{H},Y}_{ij}$, the signs are reversed:
\begin{align}
E(y_{ij},y_{C'}\mid \Xup;\beta,T)
&=
\beta\,\Ssim(y_{ij},y_{C'};T)
-
\beta\,\Sdis(y_{ij},y_{C'};T),
\label{eq:nonhorizontal_active}
\\
E(y_{ij},y_{C'}\mid \Xlow;\beta,T)
&=
-\beta\,\Ssim(y_{ij},y_{C'};T)
+
\beta\,\Sdis(y_{ij},y_{C'};T).
\label{eq:nonhorizontal_inactive}
\end{align}

The functions $\Ssim$ and $\Sdis$ are local similarity and dissimilarity operators:
\begin{equation}
\Ssim(y_1,y_2;T)
=
\begin{cases}
1, & |y_1-y_2|\le T,\\
0, & |y_1-y_2|>T,
\end{cases}
\label{eq:similarity}
\end{equation}
and
\begin{equation}
\Sdis(y_1,y_2;T)
=
1-\Ssim(y_1,y_2;T).
\label{eq:dissimilarity}
\end{equation}
Thus, horizontal neighboring LGN cells with similar activity are encouraged to support an active SCI state, while non-horizontal similarity supports the inactive SCI state.

%------------------------------------------------------------------------------
\subsection{SCI Clique Energies}
%------------------------------------------------------------------------------

The SCI prior is isotropic and encourages local spatial continuity. Its doubleton-clique energy is defined as
\begin{equation}
E(x_{ij},\xtilde_{C'}(n);\gamma(n))
=
-\gamma(n) I(x_{ij},\xtilde_{C'}(n)),
\label{eq:esci}
\end{equation}
where
\begin{equation}
I(a,b)
=
\begin{cases}
1, & a=b,\\
0, & a\ne b.
\end{cases}
\label{eq:identity}
\end{equation}
Unlike the LGN clique energies, this SCI term does not encode a preferred orientation. Instead, it promotes spatially coherent cortical activity patterns after the orientation-specific evidence has been extracted from the LGN layer.

%------------------------------------------------------------------------------
\subsection{Inference Algorithm}
\label{sec:inference_algorithm}
%------------------------------------------------------------------------------

The MAP estimate of the SCI field is computed by an iterated conditional modes (ICM)
procedure with two phases. In the first phase, each site is initialized from the LGN
evidence alone using a maximum-likelihood (ML) estimate, in which the lateral coupling is
inactive ($\gamma(0)=0$). In the second phase, the estimate is refined over successive
iterations by the full MAP update of Eq.~\eqref{eq:update}, which adds the recurrent SCI
coupling whose strength $\gamma(n)$ increases with the iteration index according to
Eq.~\eqref{eq:gamma_schedule}. The two phases are summarized in
Algorithm~\ref{alg:inference}.

\begin{algorithm}[htpb]
\caption{Fully LGN-Driven Bayes--Markov SCI Inference}
\label{alg:inference}
\begin{algorithmic}[1]
\Procedure{InferSCI}{$Y$, $\beta$, $\gamma_0$, $T$, $N_{\mathrm{iter}}$}
  \State \textbf{Input:} LGN activity field $Y=\{y_{ij}\}$; parameters $\beta,\gamma_0,T$; iterations $N_{\mathrm{iter}}$
  \State \textbf{Output:} SCI field $X=\{x_{ij}\}$, $x_{ij}\in\{\Xlow,\Xup\}$
  \Statex
  \Statex \textit{Phase 1 --- Maximum-likelihood initialization ($\gamma=0$)}
  \For{each site $(i,j)$ in parallel}
    \State $E_{\mathrm{a}} \gets \sum_{C\in\Cset^{Y}_{ij}} E(y_{ij},y_{C'}\mid \Xup;\beta,T)$
           \Comment{evidence for active state}
    \State $E_{\mathrm{s}} \gets \sum_{C\in\Cset^{Y}_{ij}} E(y_{ij},y_{C'}\mid \Xlow;\beta,T)$
           \Comment{evidence for inactive state}
    \State $x_{ij} \gets \Xup$ \textbf{if} $-E_{\mathrm{a}} > -E_{\mathrm{s}}$ \textbf{else} $\Xlow$
  \EndFor
  \Statex
  \Statex \textit{Phase 2 --- MAP refinement with recurrent coupling}
  \For{$n \gets 1$ \textbf{to} $N_{\mathrm{iter}}$}
    \State $\gamma(n) \gets \gamma_0\,\Gamma(n)$ \Comment{coupling schedule, $\Gamma(n)\!\to\!1$}
    \State $\tilde{X} \gets X$ \Comment{freeze neighbors for this sweep}
    \For{each site $(i,j)$ in parallel}
      \State $U_{\mathrm{a}} \gets -E_{\mathrm{a}}(i,j) - \!\!\sum_{C\in\Cset^{X}_{ij}}\!\! E(\Xup,\tilde{x}_{C'};\gamma(n))$
      \State $U_{\mathrm{s}} \gets -E_{\mathrm{s}}(i,j) - \!\!\sum_{C\in\Cset^{X}_{ij}}\!\! E(\Xlow,\tilde{x}_{C'};\gamma(n))$
      \State $x_{ij} \gets \Xup$ \textbf{if} $U_{\mathrm{a}} > U_{\mathrm{s}}$ \textbf{else} $\Xlow$
    \EndFor
  \EndFor
  \State \Return $X$
\EndProcedure
\end{algorithmic}
\end{algorithm}

The procedure is fully parallel: every site is updated from the frozen neighbor field of
the previous sweep, so the update order does not affect the result within an iteration.
The LGN evidence terms $E_{\mathrm{a}}$ and $E_{\mathrm{s}}$ depend only on the fixed input
field $Y$ and are therefore computed once and reused across all MAP iterations; only the
recurrent SCI term is recomputed per iteration. Empirically, the procedure converges within
a small number of iterations, consistent with the fast convergence reported for the
original model.

The model variants differ only in the normalization used for the evidence terms and in the
coupling schedule. The biologically motivated variant (V1ModelBio) uses a local maximum
operation in place of a log-sum-exp normalization and a smoothly increasing coupling
$\Gamma(n)=\tanh(n)$, whereas the baseline variant (V1Model) uses the exact log-sum-exp
normalization with a fixed coupling.

%==============================================================================
\section{Neuromorphic Interpretation}
\label{sec:neuromorphic}
%==============================================================================

The Bayes--Markov model is not only a statistical model; it was also proposed as a computationally plausible account of how local cortical circuitry could implement orientation-selective inhibition. This interpretation is consistent with broader probabilistic and predictive accounts in which recurrent cortical interactions implement local inference over latent causes \citep{rao1999,lee2003}. The operations in Eq.~\eqref{eq:update} can be interpreted through interacting functional layers (Fig.~\ref{fig:functional_layers}).

First, a contrast-detection stage computes local similarity and dissimilarity between neighboring LGN firing rates. Second, a normalization stage implements the local maximum operation used to avoid biologically implausible log-sum-exp normalization. Third, a hypothesis-testing stage compares the evidence for the active and inactive SCI states. Finally, a regularization stage implements the isotropic SCI coupling through local recurrent interactions.

\begin{figure}[htpb]
\centering
\safeincludegraphics{width=0.82\linewidth}{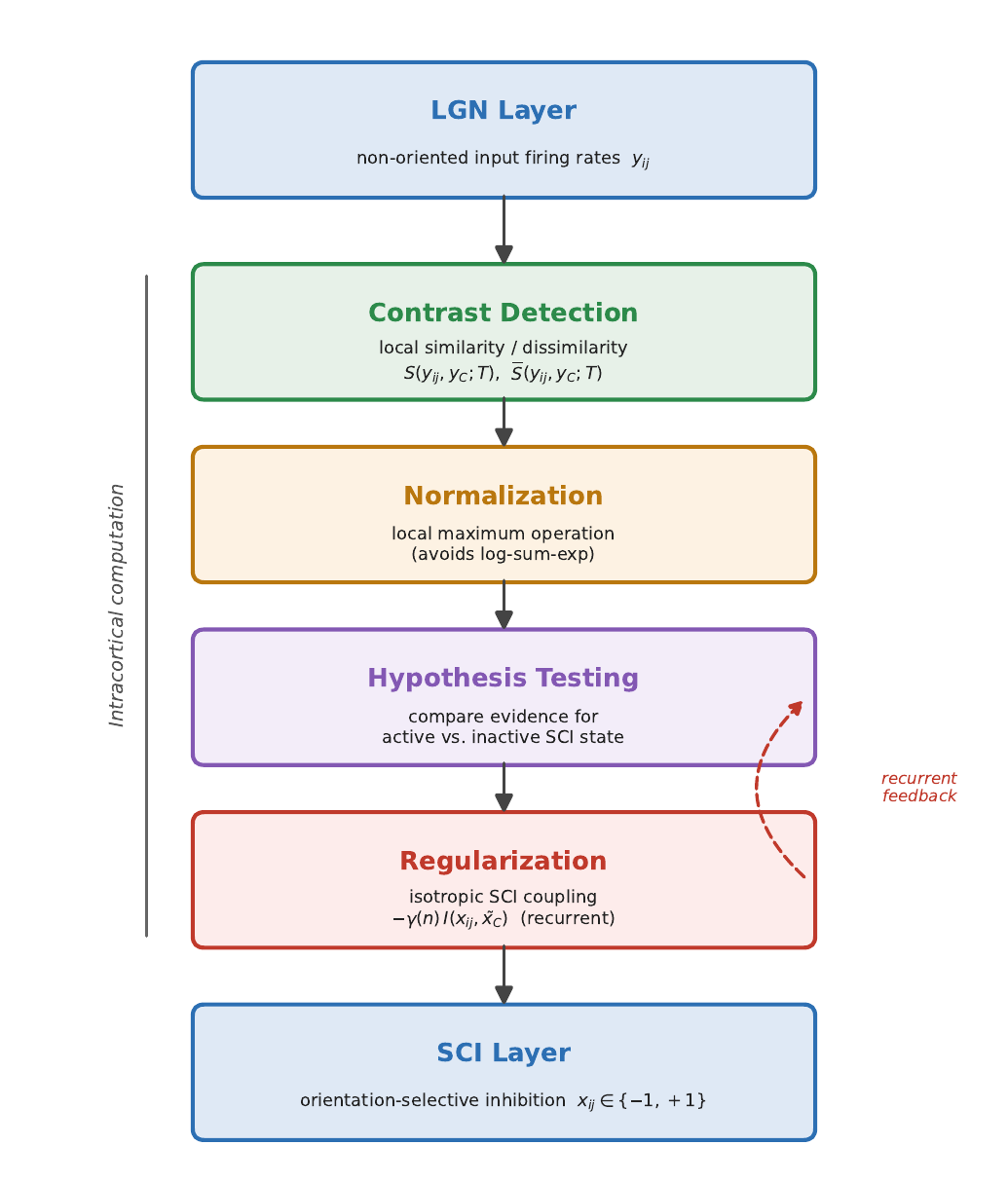}{6.0cm}
\caption{
Conceptual neuromorphic interpretation of the Bayes--Markov computation. The model can be viewed as a sequence of local operations: contrast detection, normalization, hypothesis testing, and recurrent SCI regularization.
}
\label{fig:functional_layers}
\end{figure}

This interpretation is important because it shows that the model does not rely on a global controller or a rigid feedforward geometric template. Orientation-selective inhibition is produced through local cooperative interactions that estimate the hidden SCI field from the observed LGN field.

%==============================================================================
\section{Computational Simulation Framework}
\label{sec:experiments}
%==============================================================================

The goal of the simulation study is to provide a quantitative computational evaluation of the original Shirazi model. The mathematical formulation, local clique-energy definitions, MAP objective, and biological interpretation are kept fixed. The contribution is a systematic assessment of the model's behavior under controlled stimuli, noise perturbations, model comparisons, and spiking SCI-layer realizations.

All simulations were performed on $128\times128$ lattices. Unless otherwise specified, the parameters were fixed at
\[
\beta=1.0,\qquad \gamma=2.0,\qquad T=0.15.
\]
The main rate-coded implementation is referred to as V1ModelBio; we also include a baseline rate-coded variant, V1Model. These two rate-coded variants are compared against the spiking realizations (LIF and HH) introduced in Section~\ref{sec:spiking}.

%------------------------------------------------------------------------------
\subsection{Vectorized Implementation and Correctness Verification}
%------------------------------------------------------------------------------

A direct implementation of the model updates each pixel by explicitly looping over the $128\times128$ lattice and evaluating local clique contributions. This is faithful to the mathematical model but slow for repeated simulations. To make orientation sweeps, noise sweeps, and model comparisons feasible, the local operations were reimplemented in vectorized form using shifted arrays and padded-neighborhood extraction.

This vectorization does not change the model. It computes the same local clique energies and the same update rule, but it evaluates them over the entire lattice using array operations. Correctness was verified by comparing the vectorized implementation with the original pixel-wise implementation on controlled test inputs. The outputs matched exactly, with zero mismatched pixels both after the maximum-likelihood initialization step and after the full forward pass. Therefore, the quantitative experiments below use a faster implementation that is equivalent to the original local update rules.

%------------------------------------------------------------------------------
\subsection{LGN Input Construction}
%------------------------------------------------------------------------------

For controlled synthetic experiments, LGN activity patterns were generated from bar stimuli, line-pair stimuli, and noisy versions of these inputs. When an image-based input is used, the raw grayscale image is first converted into a contrast-sensitive LGN activation field. The preprocessing is intended as a computational approximation to center-surround contrast processing in the retinal ganglion--LGN pathway.

\begin{algorithm}[htpb]
\caption{Synthesizing LGN Activation Fields}
\label{alg:lgn}
\begin{algorithmic}[1]
\Require Grayscale input image $I(x,y)$, smoothing variance $\sigma$, small constant $\epsilon$
\Ensure LGN activity matrix $Y$
\State $I_{\mathrm{blur}} \gets I * G_{\sigma}$ \Comment{Gaussian spatial integration}
\State $S_x \gets \nabla_x I_{\mathrm{blur}}$ \Comment{Horizontal Sobel derivative}
\State $S_y \gets \nabla_y I_{\mathrm{blur}}$ \Comment{Vertical Sobel derivative}
\State $M \gets \sqrt{S_x^2+S_y^2}$ \Comment{Local contrast magnitude}
\State $Y \gets \dfrac{M-\min(M)}{\max(M)-\min(M)+\epsilon}$ \Comment{Normalized LGN activity}
\State \Return $Y$
\end{algorithmic}
\end{algorithm}

%------------------------------------------------------------------------------
\subsection{Orientation Selectivity Index}
%------------------------------------------------------------------------------

To quantify tuning sharpness, we use the preferred-versus-orthogonal orientation selectivity index employed in the original simulation analysis. Let $\Rpref$ denote the mean response at the preferred orientation and $\Rorth$ denote the mean response at the orthogonal orientation. This two-point index complements full-curve measures such as circular variance that are commonly used to characterize V1 tuning \citep{ringach2002}. The index used here is defined by
\begin{equation}
\mathrm{OSI}
=
\frac{\Rpref-\Rorth}{\Rpref+\Rorth}.
\label{eq:osi}
\end{equation}
Because the rate-coded SCI states lie in $[-1,+1]$, the responses are rescaled to $[0,1]$ before computing Eq.~\eqref{eq:osi}. Under this convention, $\mathrm{OSI}\approx1$ indicates sharp orientation selectivity, whereas $\mathrm{OSI}\approx0$ indicates weak or absent selectivity.

%==============================================================================
\section{Results}
\label{sec:results}
%==============================================================================

The simulation study is organized into three principal experiments. The first two
revisit the orientation-selectivity and contrast-invariance behavior of the
Bayes--Markov model, pairing the original qualitative activity-pattern demonstrations
with a quantitative analysis based on the orientation selectivity index and controlled
noise. The third introduces a spiking realization of the SCI layer and evaluates it
through a battery of experiments spanning baseline selectivity, per-neuron heterogeneity,
multi-trial firing-rate stability, and single-neuron gain analysis. Throughout, the
qualitative (image-based) results show the spatial mechanism directly, while the
quantitative results give it a measurable form.

Unless otherwise noted, all experiments use $128\times128$ lattices with
$\beta=1.0$, $\gamma=2.0$, $T=0.15$, and the fully LGN-driven update rule of
Section~\ref{sec:model}.

%==============================================================================
\subsection{Simulation 1: Orientation Selectivity}
\label{sec:sim1}
%==============================================================================

\subsubsection{Qualitative orientation response}

A bar stimulus was presented at a range of orientations and propagated through the LGN
preprocessing stage and the SCI inference. For each orientation, the resulting activity
pattern was recorded as an input--LGN--SCI triptych. The SCI layer responds strongly when
the stimulus matches the model's preferred (horizontally tuned) orientation and is
suppressed otherwise, reproducing the orientation-selective behavior of the original
model across the principal orientations.

\begin{figure}[htpb]
\centering
\safeincludegraphics{width=0.82\linewidth}{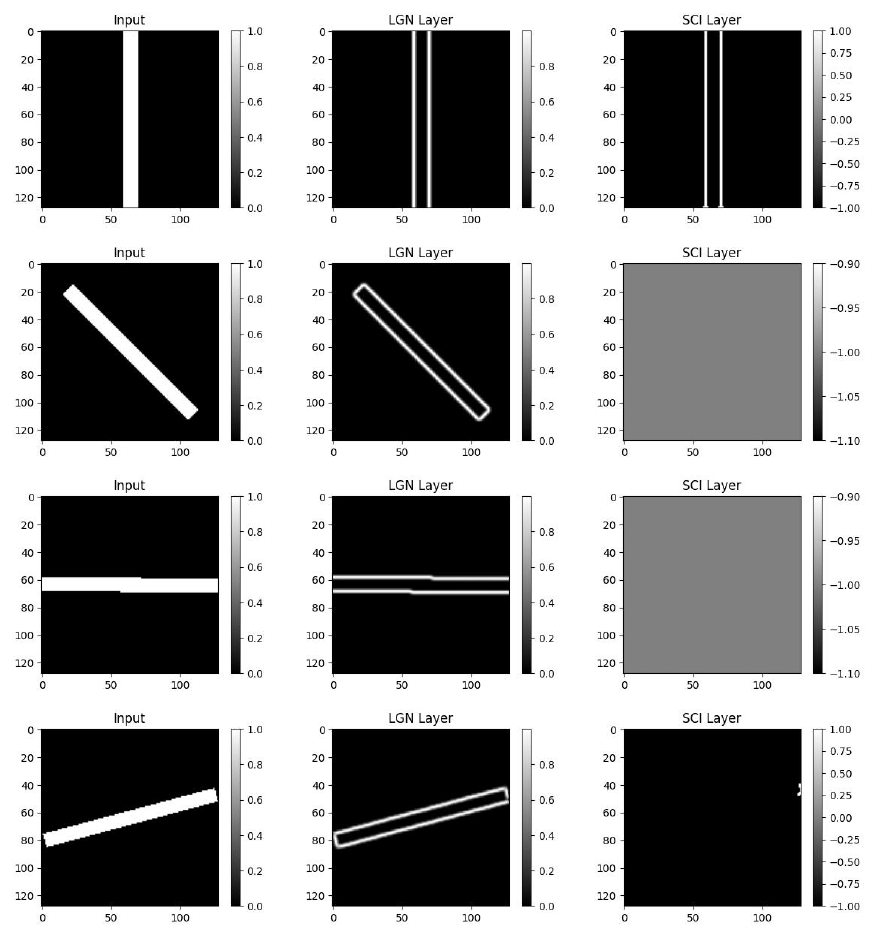}{7.0cm}
\caption{
Qualitative orientation response of the Bayes--Markov model. Each row shows the input
bar, the LGN activation field, and the inferred SCI activity at a given orientation. The
SCI layer is strongly active for the preferred orientation and suppressed for
non-preferred orientations.
}
\label{fig:sim1_patterns}
\end{figure}

\subsubsection{Quantitative tuning curve}

To give this behavior a quantitative form, a bar was presented at orientations from
$0^\circ$ to $180^\circ$ in increments of $10^\circ$, and the mean SCI activation inside
the bar region was recorded at each orientation. The resulting tuning curve peaks sharply
at the preferred orientation and is strongly suppressed at the orthogonal orientation,
yielding
\[
\mathrm{OSI}=1.0.
\]
Within this implementation, the result gives a quantitative form to the qualitative claim that the
inference produces a sharply selective orientation response from non-oriented LGN input.

\begin{figure}[htpb]
\centering
\safeincludegraphics{width=0.72\linewidth}{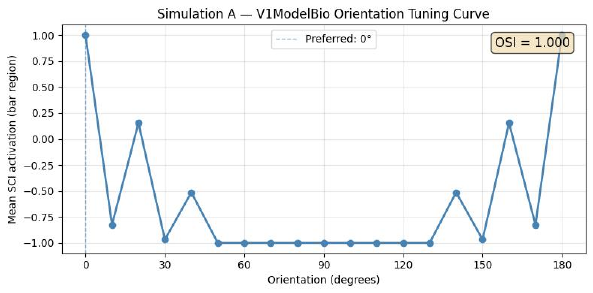}{7.0cm}
\caption{
Orientation tuning curve for the rate-coded SCI model (V1ModelBio). The response peaks at
the preferred orientation and is strongly suppressed at orthogonal angles, yielding
$\mathrm{OSI}=1.0$. The small secondary responses at oblique orientations are expected from
the $3\times3$ neighborhood, which naturally represents the principal orientations
$0^\circ$, $45^\circ$, $90^\circ$, and $135^\circ$.
}
\label{fig:simA_orientation}
\end{figure}

The combination of the qualitative pattern (Fig.~\ref{fig:sim1_patterns}) and the
quantitative curve (Fig.~\ref{fig:simA_orientation}) demonstrates orientation-selective behavior
both visually and numerically. Finer angular resolution at oblique orientations would
require larger neighborhoods or more elaborate clique configurations, consistent with the
spatial resolution of the original $3\times3$ formulation.

%==============================================================================
\subsection{Simulation 2: Contrast Invariance and Noise Robustness}
\label{sec:sim2}
%==============================================================================

\subsubsection{Qualitative contrast invariance}

A central property examined in the original model is the approximate invariance of orientation
selectivity to stimulus contrast, a major constraint in models of V1 tuning
\citep{sclar1982,troyer1998,ferster2000,marino2005}. To test this qualitatively, a bar of fixed orientation was presented at
several contrast levels, and the SCI activity pattern was recorded at each level. The
inferred SCI pattern is essentially unchanged across the tested contrast values, supporting the
interpretation that the local similarity-based inference depends primarily on the relative structure
of the LGN input rather than its absolute magnitude.

\begin{figure}[htpb]
\centering
\safeincludegraphics{width=0.82\linewidth}{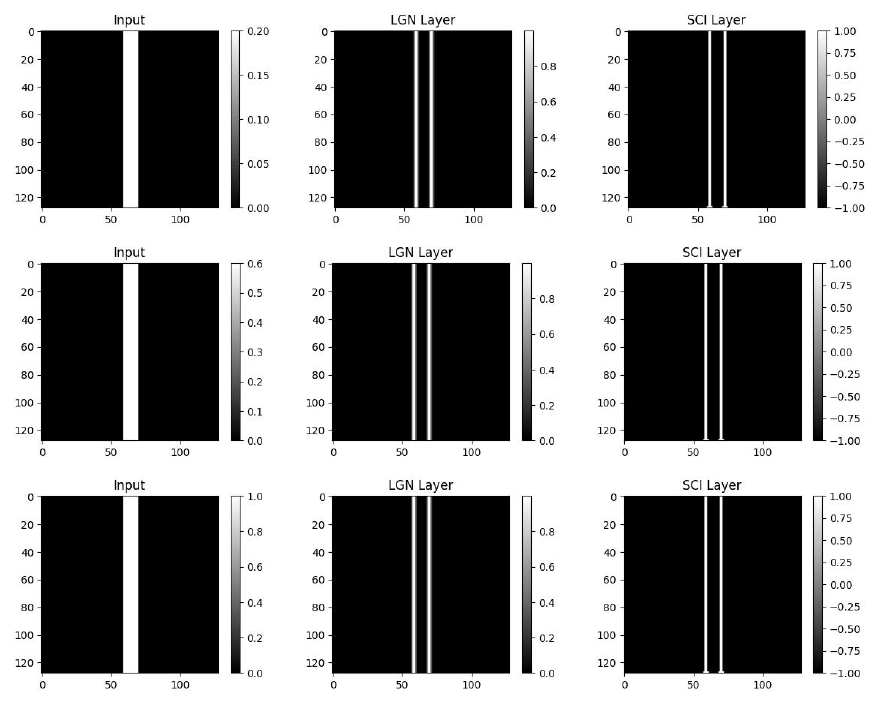}{7.0cm}
\caption{
Qualitative contrast invariance. The inferred SCI activity pattern is essentially
unchanged as stimulus contrast varies, illustrating that selectivity depends on relative
LGN structure rather than absolute activation magnitude.
}
\label{fig:sim2_contrast}
\end{figure}

\subsubsection{Quantitative noise robustness across orientations and model variants}

Robustness was then quantified by adding zero-mean Gaussian noise of standard deviation
$\sigma\in\{0,0.05,0.1,0.2,0.3,0.5\}$ to the LGN activation field and measuring the mean
response inside the stimulated bar region. To probe both the preferred response and the
stability of suppression, the sweep was repeated for bars at four orientations:
$0^\circ$ (preferred), $10^\circ$, $45^\circ$, and $90^\circ$ (orthogonal). At each noise
level an identical noise realization was applied to every model, including the spiking
models introduced in Simulation~3, so the comparison is fair across variants. Because the
rate-coded models report signed SCI activation in $[-1,+1]$ while the spiking models report
firing rate in spikes per second, the two families are shown in separate panels with their
natural response axes (Figs.~\ref{fig:sim2_noise_0}--\ref{fig:sim2_noise_90}).

At the preferred orientation (Fig.~\ref{fig:sim2_noise_0}), the response is preserved for
small perturbations: V1ModelBio remains at full activation ($+1.0$) up to
$\sigma\approx0.05$, falls to $\approx+0.56$ at $\sigma=0.1$, and is actively suppressed
(negative activation) for $\sigma\ge0.2$. The spiking models track this behavior in
firing-rate terms, declining from $\approx49$~Hz (LIF) and $\approx67$~Hz (HH) at low noise
to a few spikes per second beyond $\sigma\approx0.1$. This identifies a practical noise
regime, near $\sigma\approx0.1$, within which the local inference remains stable.

The three non-preferred orientations show that noise does not spuriously activate the
wrong orientation. At $90^\circ$ (Fig.~\ref{fig:sim2_noise_90}) the model is fully and
stably suppressed: V1ModelBio stays at $-1.0$ and both spiking models remain silent
($\approx0$~Hz) at every noise level. At $10^\circ$ (Fig.~\ref{fig:sim2_noise_10}) the
response is likewise suppressed throughout ($\approx-0.8$ rate-coded activation, $<10$~Hz
spiking), showing that even a small angular offset from the preferred orientation is
rejected. At $45^\circ$ (Fig.~\ref{fig:sim2_noise_45}) a small residual response is present
at zero noise (V1ModelBio $\approx-0.5$, HH $\approx19$~Hz), consistent with the weak
secondary oblique response expected from the $3\times3$ neighborhood
(Fig.~\ref{fig:simA_orientation}); this residual collapses by $\sigma\approx0.1$. In every
non-preferred case the response only moves further toward suppression as noise increases,
never toward activation.

\begin{figure}[htpb]
\centering
\safeincludegraphics{width=0.95\linewidth}{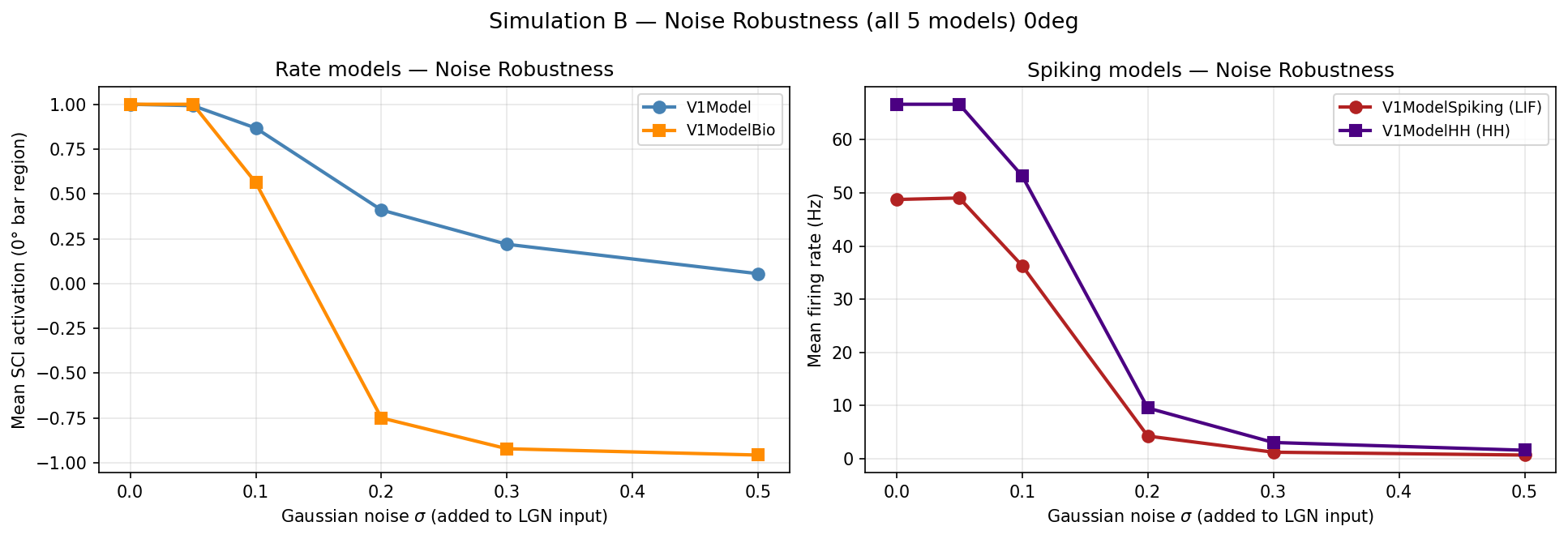}{6.0cm}
\caption{
Noise robustness at the preferred orientation ($0^\circ$). Left: rate-coded models (signed
SCI activation). Right: spiking models (mean firing rate). The preferred response is
preserved up to $\sigma\approx0.05$ and degrades beyond $\sigma\approx0.1$.
}
\label{fig:sim2_noise_0}
\end{figure}

\begin{figure}[htpb]
\centering
\safeincludegraphics{width=0.95\linewidth}{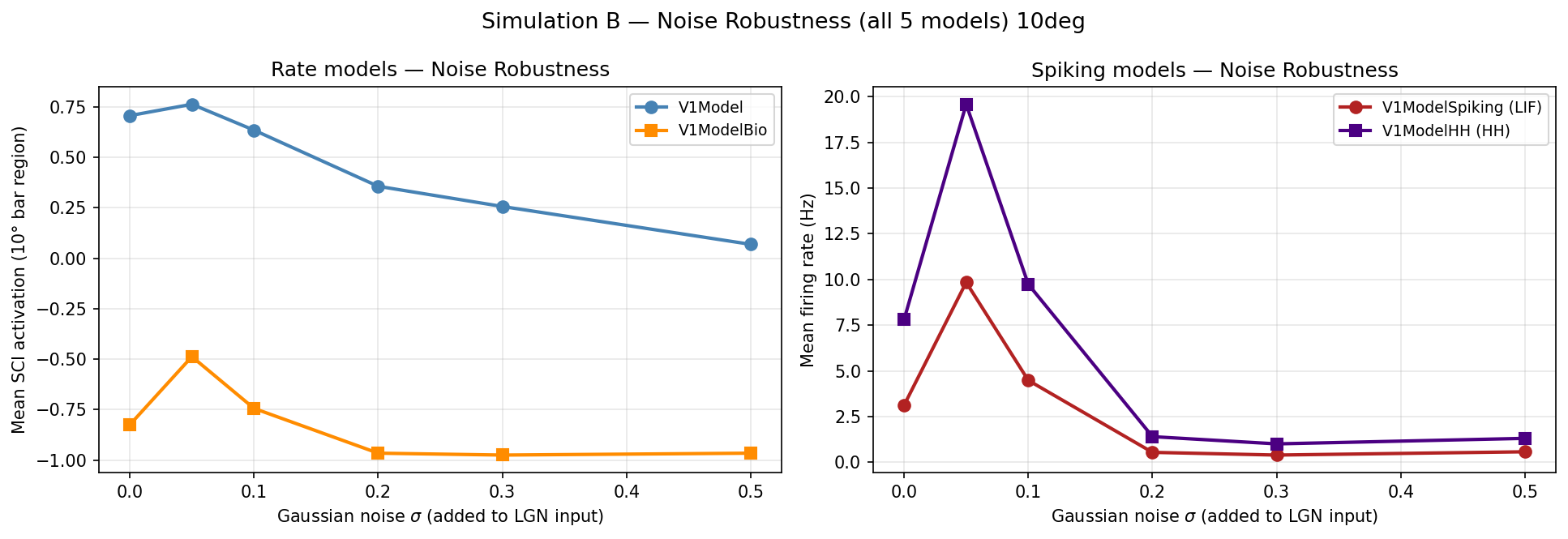}{6.0cm}
\caption{
Noise robustness at $10^\circ$. A small angular offset from the preferred orientation is
suppressed at every noise level, and the response never moves toward activation.
}
\label{fig:sim2_noise_10}
\end{figure}

\begin{figure}[htpb]
\centering
\safeincludegraphics{width=0.95\linewidth}{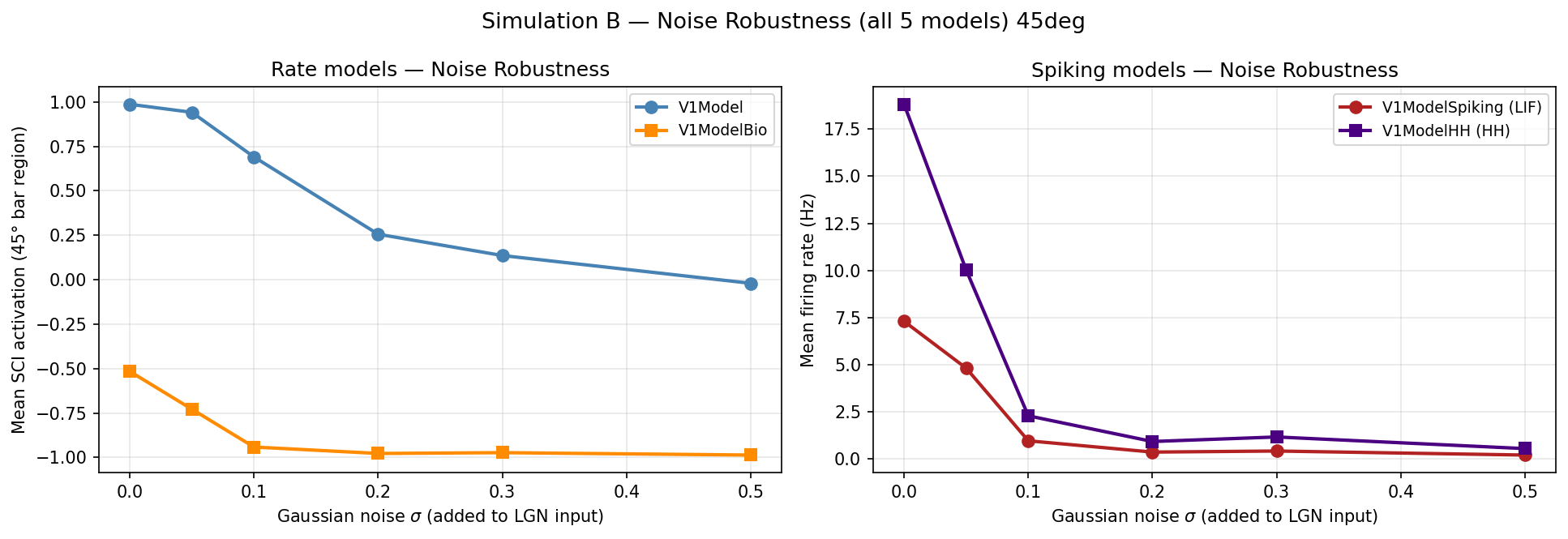}{6.0cm}
\caption{
Noise robustness at $45^\circ$. A weak secondary oblique response, expected from the
$3\times3$ neighborhood, is present at zero noise and collapses by $\sigma\approx0.1$.
}
\label{fig:sim2_noise_45}
\end{figure}

\begin{figure}[htpb]
\centering
\safeincludegraphics{width=0.95\linewidth}{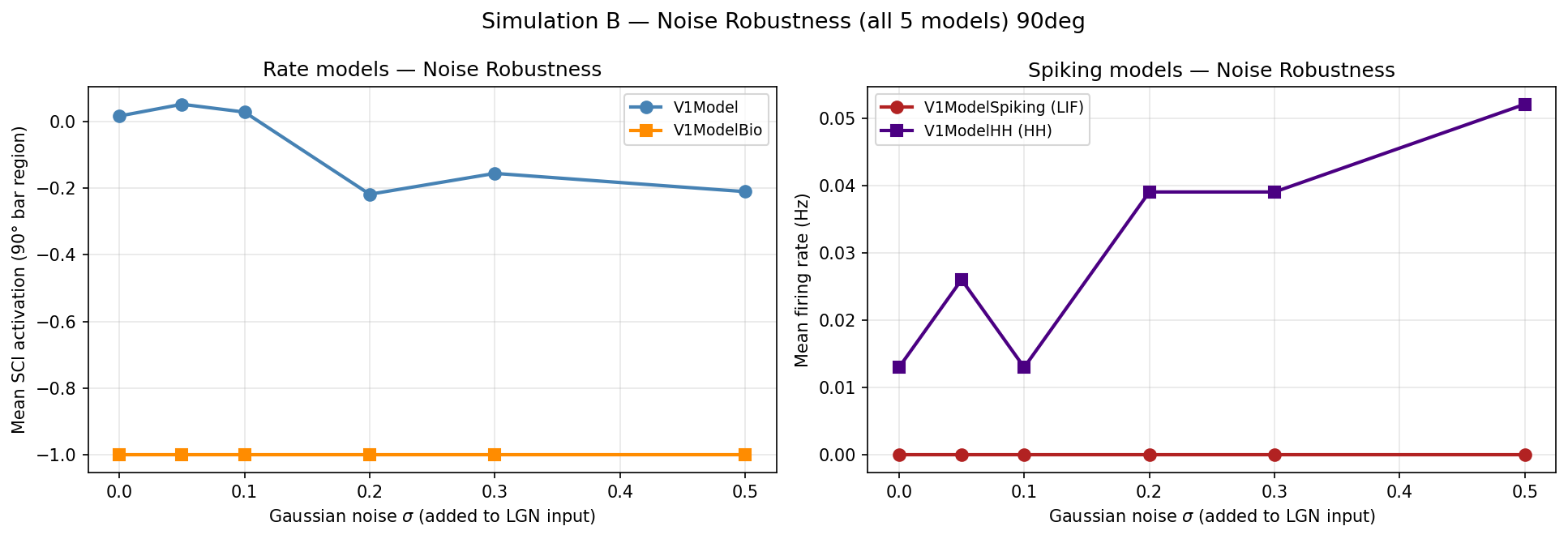}{6.0cm}
\caption{
Noise robustness at the orthogonal orientation ($90^\circ$). The model is fully and stably
suppressed (rate-coded activation $-1.0$; spiking rate $\approx0$~Hz) across all noise
levels.
}
\label{fig:sim2_noise_90}
\end{figure}

Together, the qualitative contrast analysis (Fig.~\ref{fig:sim2_contrast}) and the
orientation-resolved degradation curves
(Figs.~\ref{fig:sim2_noise_0}--\ref{fig:sim2_noise_90}) give a comparative form to
the original robustness claim within the tested simulation regime: the preferred response is contrast-invariant and tolerant of
moderate LGN noise, non-preferred suppression is maintained under noise, and only unbounded
perturbation destroys the selective response.

%==============================================================================
\subsection{Cross-Model Comparison}
\label{sec:simC}
%==============================================================================

The two rate-coded experiments above were carried out with the biologically motivated
implementation (V1ModelBio). Comparisons among feedforward, recurrent, balanced, and spiking models are common in the orientation-selectivity literature \citep{ursino2004,sadeh2014,sadeh2015,merkt2019,chariker2020}. To place this implementation in context, the model variants
were compared directly under a shared parameter setting
($\beta=1.0$, $\gamma=2.0$, $T=0.15$). Because all variants are evaluated with identical
input and nominal parameters, the comparison reflects their behavior under a common
operating condition rather than individually optimized performance.

Under this shared setting, V1ModelBio produced the clearest orientation-selective output
($\mathrm{OSI}=1.0$), whereas the baseline V1Model was more broadly tuned
($\mathrm{OSI}=0.33$); the side-by-side output maps are shown in
Fig.~\ref{fig:simC_outputs}. Both spiking realizations remained sharply selective under the
same drive, with $\mathrm{OSI}=1.0$ for the LIF model and $\mathrm{OSI}=0.998$ for the HH
model, and the full set of tuning curves is compared in Fig.~\ref{fig:simC_curves}. The
comparison should therefore be read as a shared-parameter analysis rather than an
individually optimized one.

\begin{figure}[htpb]
\centering
\safeincludegraphics{width=0.88\linewidth}{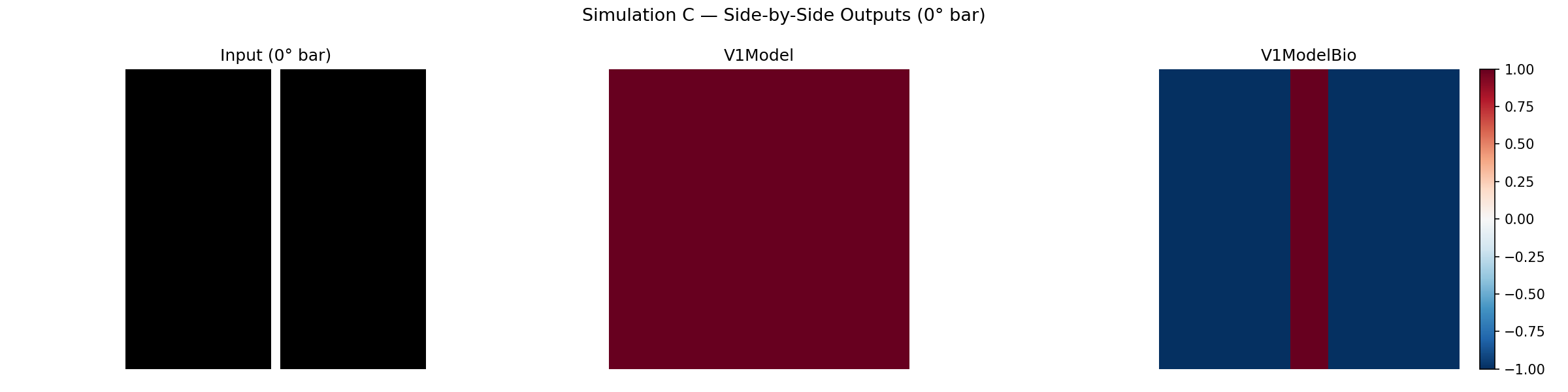}{7.0cm}
\caption{
Side-by-side SCI outputs from V1Model and V1ModelBio under identical input and shared
parameter values. V1ModelBio gives the sharpest selective response under this setting.
}
\label{fig:simC_outputs}
\end{figure}

\begin{figure}[htpb]
\centering
\safeincludegraphics{width=0.96\linewidth}{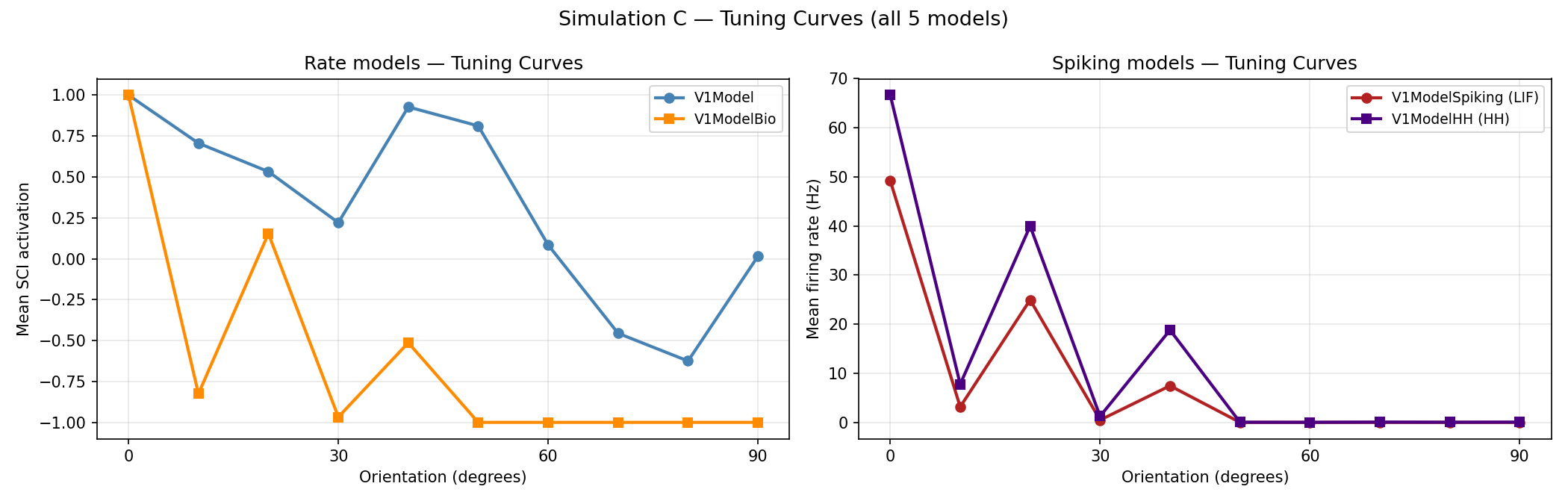}{7.0cm}
\caption{
Orientation tuning curves across the model variants. Left: rate-coded models (signed SCI
activation). Right: spiking models (firing rate). The biologically motivated rate model and
both spiking models are sharply selective, whereas the baseline model is more broadly
tuned.
}
\label{fig:simC_curves}
\end{figure}

%==============================================================================
\section{Simulation 3: Spiking SCI-Layer Implementation}
\label{sec:spiking}
%==============================================================================

The first two simulations use the original rate-coded SCI representation, in which each
cortical cell takes one of two values. This binary representation is natural for MAP
inference, but it imposes an artificial form of bistability at the rate-code level. The
third simulation asks whether the inferred SCI field can instead be expressed through a
biologically explicit temporal mechanism, and in particular whether Shirazi's own
prediction about spiking implementations holds.

\begin{prediction}[Shirazi's proposed temporal-coding interpretation]
\label{pred:shirazi}
If the rate-coded implementation is replaced by a temporal-coding implementation using
Hodgkin--Huxley-type spiking neurons, the artificial bistability of the rate code is
replaced by the biologically intrinsic, all-or-none bistability of action potentials
generated by voltage-gated sodium and potassium channels.
\end{prediction}

The spiking realization does not change the Bayes--Markov inference step; it is a downstream temporal readout of the inferred MAP field rather than a replacement of the inference algorithm. First, the rate-coded
MAP procedure produces an orientation-selective SCI field $s_{\mathrm{MAP}}$. This field is
then converted into an input current,
\begin{equation}
I_{\mathrm{LGN}}
=
I_{\mathrm{offset}}
+
I_{\mathrm{scale}}\,s_{\mathrm{MAP}},
\label{eq:spiking_current}
\end{equation}
which drives a grid of spiking neurons. The spiking output is summarized by firing rate in
spikes per second, a non-negative temporal realization of the same inferred SCI pattern.

\begin{algorithm}[htpb]
\caption{Spiking SCI-Layer Realization}
\label{alg:spiking}
\begin{algorithmic}[1]
\Procedure{SpikingSCI}{$Y$, neuron model $\mathcal{M}$, $\Delta t$, $t_{\max}$}
  \State \textbf{Input:} LGN field $Y$; neuron model $\mathcal{M}\in\{\text{LIF},\text{HH}\}$; step $\Delta t$; duration $t_{\max}$
  \State \textbf{Output:} firing-rate field $R=\{r_{ij}\}$
  \Statex
  \State $s_{\mathrm{MAP}} \gets \Call{InferSCI}{Y,\dots}$ \Comment{rate-coded MAP field (Alg.~\ref{alg:inference})}
  \State $I_{\mathrm{LGN}} \gets I_{\mathrm{offset}} + I_{\mathrm{scale}}\, s_{\mathrm{MAP}}$ \Comment{static drive, Eq.~\eqref{eq:spiking_current}}
  \State initialize membrane state $V$, spike counts $n_{ij}\gets 0$
  \For{$t \gets 0$ \textbf{to} $t_{\max}$ \textbf{step} $\Delta t$}
    \State $I_{\mathrm{syn}} \gets$ lateral synaptic current from recent neighbor spikes
    \State $I_{\mathrm{tot}} \gets I_{\mathrm{LGN}} + \gamma_{\mathrm{lat}}\,I_{\mathrm{syn}}$
    \State integrate $\mathcal{M}$ one step: update $V$ given $I_{\mathrm{tot}}$
           \Comment{Eq.~\eqref{eq:hh_voltage}--\eqref{eq:hh_gates} for HH}
    \For{each site with a threshold crossing}
      \State emit spike; $n_{ij} \gets n_{ij}+1$; reset $V_{ij}$
    \EndFor
  \EndFor
  \State $r_{ij} \gets n_{ij}\,/\,t_{\max}$ \Comment{firing rate (spikes per second)}
  \State \Return $R$
\EndProcedure
\end{algorithmic}
\end{algorithm}

\subsection{LIF and Hodgkin--Huxley Models}

Two spiking neuron models were implemented. The first is a leaky integrate-and-fire (LIF)
model, included as a simple comparison baseline following the classical integrate-and-fire
tradition \citep{lapicque1907}. The second is a Hodgkin--Huxley (HH) model
\citep{hodgkin1952}, included because it is the biophysically explicit mechanism named in
Prediction~\ref{pred:shirazi}, generating all-or-none action potentials through
voltage-gated sodium and potassium channels.

The LIF model is defined by a passive membrane equation with threshold-and-reset spiking.
The HH model is described by the standard conductance-based equations
\begin{equation}
C_m\frac{dV}{dt}
=
I
-
g_{\mathrm{Na}}m^3h(V-E_{\mathrm{Na}})
-
g_{\mathrm{K}}n^4(V-E_{\mathrm{K}})
-
g_{\mathrm{L}}(V-E_{\mathrm{L}}),
\label{eq:hh_voltage}
\end{equation}
together with gating-variable dynamics
\begin{equation}
\frac{dq}{dt}
=
\alpha_q(V)(1-q)-\beta_q(V)q,
\qquad
q\in\{m,h,n\}.
\label{eq:hh_gates}
\end{equation}
The HH model therefore provides a mechanistic spiking realization of the SCI field rather
than an imposed binary firing-rate code.

\subsection{Experiment 1: Baseline spiking orientation selectivity}

Both spiking implementations preserved the sharp orientation selectivity of the rate-coded
MAP field. Driven by the preferred ($0^\circ$) stimulus, the LIF model fired at
approximately $49$~Hz and the HH model at approximately $67$~Hz, while both were silent at
the orthogonal ($90^\circ$) orientation, yielding $\mathrm{OSI}=1.0$ for the LIF model and
$\mathrm{OSI}=0.999$ for the HH model. No cell in the preferred-orientation population
exceeded $100$~Hz, showing that the injected drive remains bounded in the tested simulation
rather than producing runaway activity. Figures~\ref{fig:hh_maps_raster}
and~\ref{fig:lif_maps_raster} show the corresponding firing-rate maps and spike rasters:
activity is confined to the preferred bar, and the raster displays temporally explicit,
all-or-none action potentials rather than the imposed $\pm1$ values of the rate code.

\begin{figure}[htpb]
\centering
\safeincludegraphics{width=0.86\linewidth}{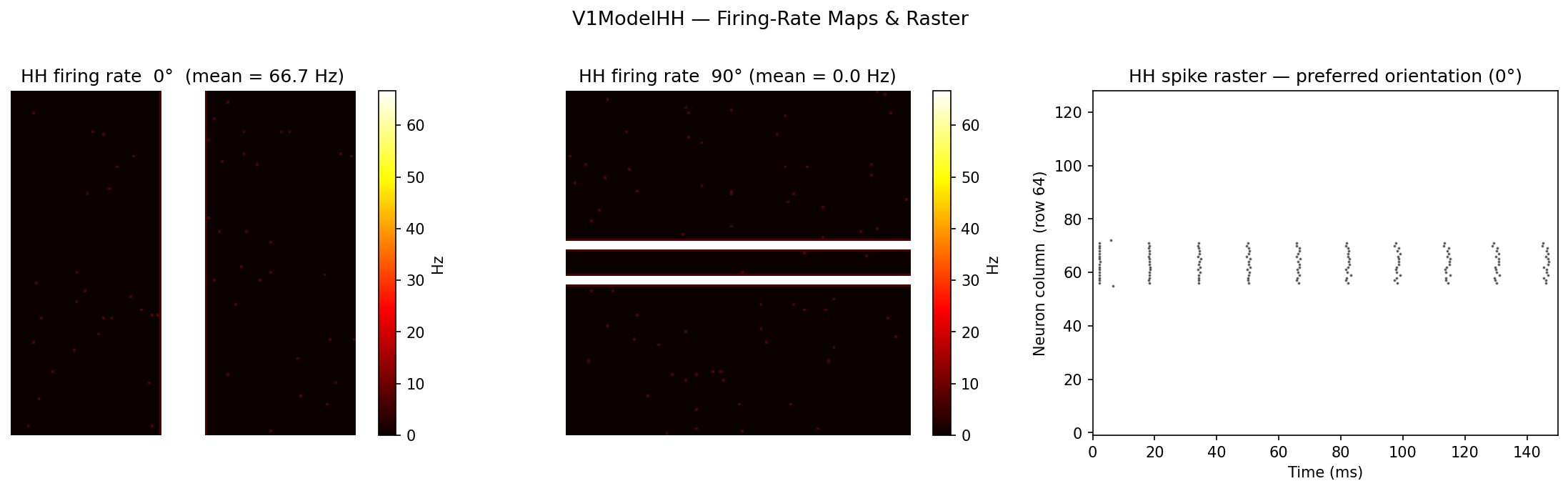}{7.0cm}
\caption{
Hodgkin--Huxley spiking implementation of the SCI layer. The HH model fires strongly for
the preferred orientation and remains near silent for the orthogonal orientation while
producing temporally explicit spike trains. This provides a proof-of-concept spiking realization
consistent with the all-or-none behavior described in Prediction~\ref{pred:shirazi}.
}
\label{fig:hh_maps_raster}
\end{figure}

\begin{figure}[htpb]
\centering
\safeincludegraphics{width=0.86\linewidth}{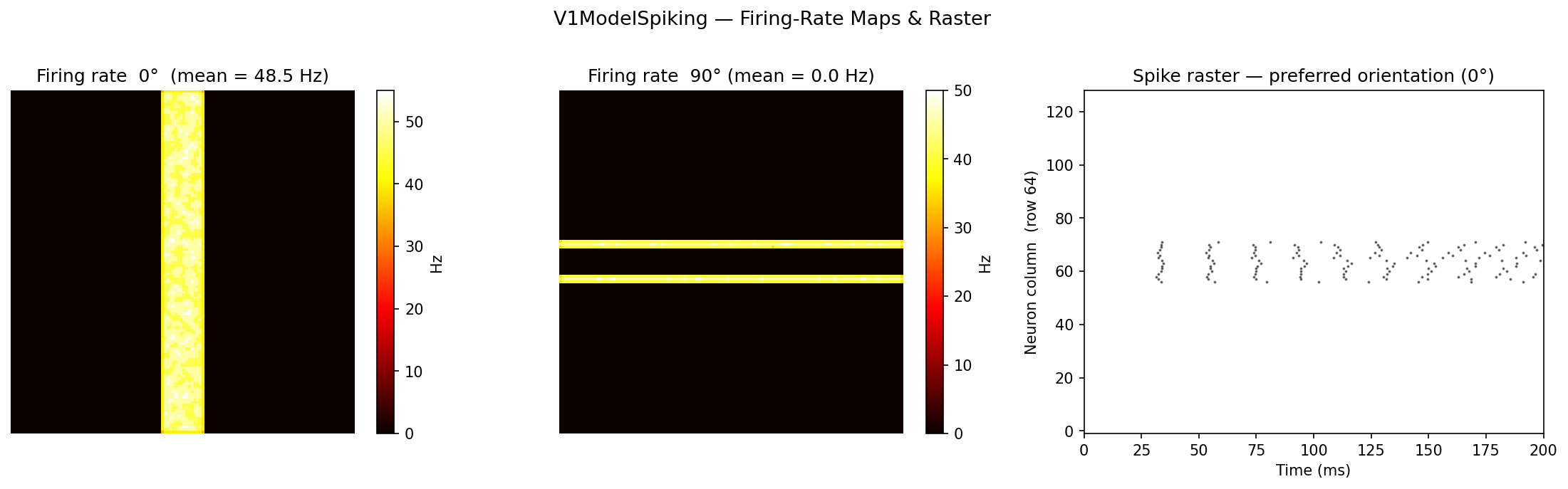}{7.0cm}
\caption{
LIF spiking implementation of the SCI layer, included as a comparison baseline. Like the HH
model, the LIF model confines activity to the preferred orientation and produces temporally
explicit spike trains.
}
\label{fig:lif_maps_raster}
\end{figure}

\subsection{Experiment 2: Per-neuron heterogeneity and multi-trial firing-rate stability}

Prediction~\ref{pred:shirazi} asks whether the bistable SCI field, once expressed through
spiking dynamics, remains all-or-none under realistic variability rather than as a
single-run artifact. Two sources of variability were therefore introduced. First, fixed
per-neuron heterogeneity was added by multiplying each neuron's MAP-derived input current by
a static random gain drawn from $\mathcal{U}(0.85,1.15)$. Second, the measurement was
repeated over ten independent stochastic-noise realizations, and the coefficient of
variation (CV) of the active-cell firing rates in the preferred-orientation region was
computed both per trial and pooled across all trials. The results are summarized in
Table~\ref{tab:spiking_results}.

\begin{table}[htpb]
\centering
\caption{
Orientation selectivity and multi-trial firing-rate variability for the spiking SCI
implementations. OSI is measured from the orientation tuning sweep. The coefficient of
variation (CV) is computed over active-cell firing rates in the preferred-orientation bar
region: the pooled CV aggregates all active cells across ten independent noise realizations,
and the per-trial CV is the mean $\pm$ standard deviation of the single-trial CV values.
}
\label{tab:spiking_results}
\renewcommand{\arraystretch}{1.15}
\begin{tabular}{lccc}
\toprule
\textbf{Model} & \textbf{OSI} & \textbf{Pooled CV ($10$ trials)} & \textbf{Per-trial CV} \\
\midrule
LIF baseline        & 1.000 & 0.061 & $0.061 \pm 0.001$ \\
LIF + heterogeneity & 1.000 & 0.101 & $0.101 \pm 0.001$ \\
HH baseline         & 0.999 & 0.000 & $0.000 \pm 0.000$ \\
HH + heterogeneity  & 0.973 & 0.005 & $0.004 \pm 0.003$ \\
\bottomrule
\end{tabular}
\end{table}

The two neuron models behave qualitatively differently. The LIF model remains orientation
selective but grades under heterogeneity: its pooled CV rises from $0.061$ (baseline) to
$0.101$ (heterogeneous), and the per-trial CV is tight ($0.061\pm0.001$ and
$0.101\pm0.001$), so the grading is a stable property of the model rather than
trial-to-trial noise. The HH model instead remains all-or-none: its pooled CV is $0.000$ at
baseline and only $0.005$ under heterogeneity, with per-trial CV $0.004\pm0.003$. In the HH
implementation active cells therefore fire at a characteristic rate or remain silent, and
this bistability persists across independent noise realizations
(Fig.~\ref{fig:spiking_distributions}), consistent with the
action-potential bistability of Prediction~\ref{pred:shirazi}.

\begin{figure}[htpb]
\centering
\safeincludegraphics{width=0.88\linewidth}{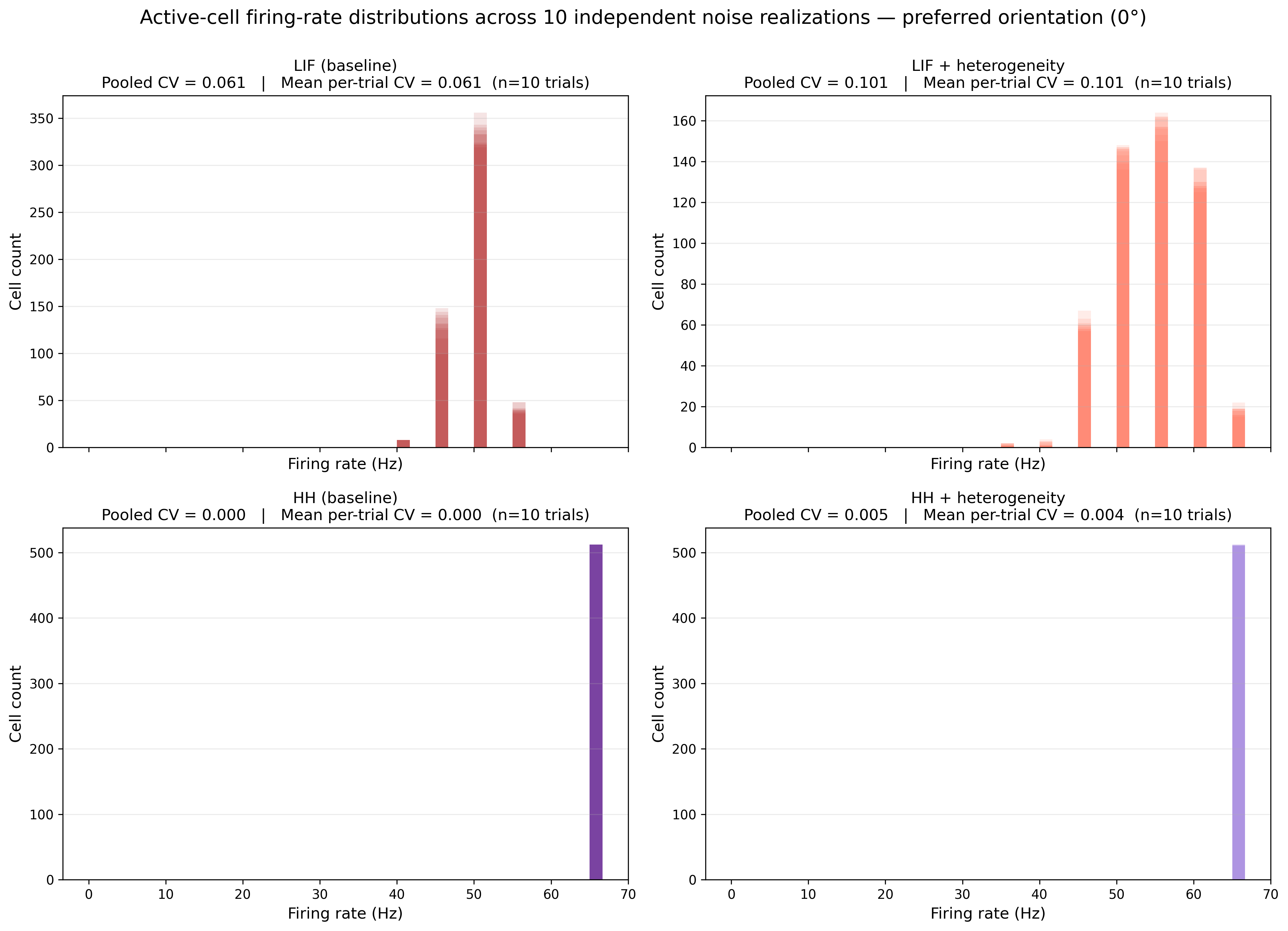}{7.5cm}
\caption{
Active-cell firing-rate distributions across ten independent noise realizations (preferred
orientation, $0^\circ$). Faint curves are individual trials; each panel title reports the
pooled and mean per-trial CV. The HH distributions remain sharply concentrated at a single
rate (all-or-none firing) across trials, whereas the LIF distributions broaden under input
heterogeneity. The concentrated HH distribution is consistent with the modeled prediction.
}
\label{fig:spiking_distributions}
\end{figure}

\subsection{Experiment 3: Single-neuron \texorpdfstring{$f$--$I$}{f--I} analysis}

The contrast between the two neuron models is explained by their single-neuron
frequency--current ($f$--$I$) curves (Fig.~\ref{fig:fI_compare}). The LIF neuron has a
smooth, continuously rising $f$--$I$ curve: over the heterogeneity drive range
($0.935$--$1.265$ in dimensionless input units) its firing rate spans the full interval
$0$--$35$~Hz, a spread of $35$~Hz, so a $\pm15\%$ gain variation maps directly into a
proportional spread of firing rates. The HH neuron instead has an abrupt spike-initiation
threshold followed by a near-plateau firing regime: over the corresponding heterogeneity
drive range ($6.8$--$9.2~\mu\mathrm{A/cm}^2$) its firing rate varies only between $60$ and
$66.7$~Hz, a spread of $6.7$~Hz. The gain heterogeneity therefore falls almost entirely on
the HH plateau, so cells receiving different drives fire at nearly the same rate. The
all-or-none behavior of the HH population is thus a direct consequence of its voltage-gated
channel dynamics, supporting the behavior described in Prediction~\ref{pred:shirazi}.

\begin{figure}[htpb]
\centering
\safeincludegraphics{width=0.92\linewidth}{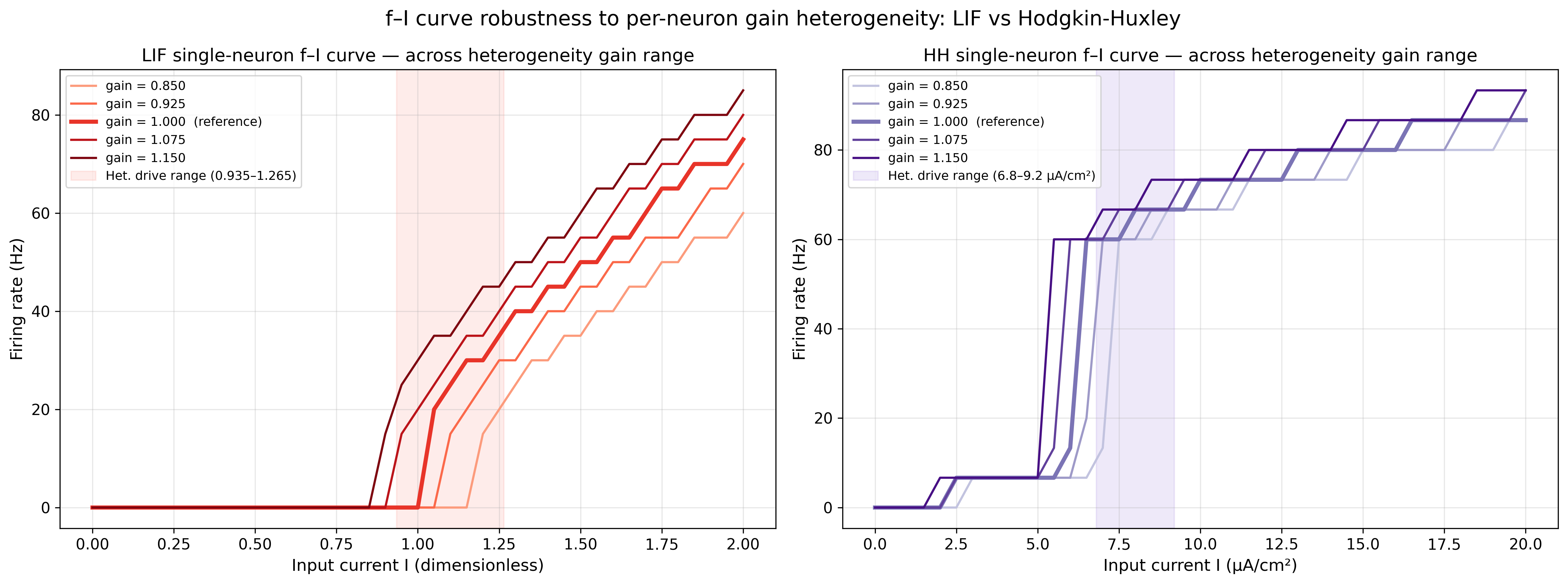}{6.5cm}
\caption{
Single-neuron frequency--current ($f$--$I$) curves across the per-neuron gain heterogeneity
range. Left: the LIF neuron has a smooth, continuously rising curve, so the shaded
heterogeneity drive range maps to a wide firing-rate spread ($\approx35$~Hz). Right: the HH
neuron has an abrupt onset and a near-plateau regime, so the shaded heterogeneity drive
range maps to a narrow spread ($\approx6.7$~Hz). This explains why the HH population
maintains all-or-none firing while the LIF population grades.
}
\label{fig:fI_compare}
\end{figure}

%==============================================================================
\section{Discussion}
\label{sec:discussion}
%==============================================================================

The simulations provide a quantitative re-examination of the original Bayes--Markov model. The main conclusion is that the core mechanism proposed by \citet{shirazi2004} remains computationally coherent under the numerical experiments considered here.

First, Simulation~1 demonstrates that the model produces a sharply tuned response, pairing the qualitative orientation patterns with a tuning curve that attains $\mathrm{OSI}=1.0$. This is consistent with the central claim that orientation-selective inhibition can emerge from non-oriented LGN inputs through local probabilistic inference. Second, Simulation~2 quantifies contrast invariance and robustness, identifying a practical transition near $\sigma=0.1$ within the tested noise scale and showing the corresponding degradation across all model variants. The cross-model comparison further shows that the fully LGN-driven biologically motivated implementation produces the clearest selective behavior under the shared parameters. This comparison is complementary to broader work showing that selectivity may arise or propagate through multiple combinations of feedforward structure, recurrent balance, and spiking dynamics \citep{benyishai1995,somers1995,marino2005,sadeh2015,merkt2019,chariker2020}.

The spiking implementation provides an additional biological interpretation. The original MAP formulation produces a binary SCI field. In the HH implementation, this binary field is not merely imposed as an artificial rate code; it is expressed through all-or-none action-potential generation. The LIF comparison is useful because it shows that simpler passive spiking dynamics can preserve orientation selectivity but may produce graded firing-rate variability under heterogeneity. The HH model, in contrast, preserves all-or-none firing because of its intrinsic voltage-gated channel dynamics. Across ten independent noise realizations the HH firing-rate distribution remained essentially degenerate (pooled $\mathrm{CV}\le0.005$), showing that this all-or-none behavior is stable across the tested realizations rather than being restricted to a single run.

Several limitations remain. The current experiments focus on controlled synthetic stimuli rather than large-scale natural image benchmarks. The $3\times3$ neighborhood naturally favors a small set of principal orientations, and finer orientation resolution would require larger neighborhoods or more detailed clique structures. These limitations define the scope of the present study, which is to re-implement and quantitatively evaluate the original Bayes--Markov theory rather than replace it with a new model.

%==============================================================================
\section{Conclusion}
\label{sec:conclusion}
%==============================================================================

This paper revisited Shirazi's Bayes--Markov model of cortical orientation selectivity and provided a computational re-implementation with quantitative simulation validation. The original theory formulates the emergence of orientation-selective inhibition as a MAP estimation problem over a two-layer hierarchical Markov random field, solved by a local parallel relaxation algorithm. We preserved this formulation and focused on making its behavior measurable through systematic simulations.

The results support the main qualitative predictions of the original model. The reimplemented model produces sharp orientation tuning, achieves $\mathrm{OSI}=1.0$ in the biologically motivated implementation, and remains robust to moderate LGN noise. The vectorized implementation makes these simulations computationally feasible without changing the underlying local clique operations. Finally, the spiking SCI-layer implementation shows that the inferred orientation-selective field can be expressed through temporally explicit neural dynamics. The HH model provides a biologically interpretable proof-of-concept all-or-none spiking realization that is consistent with Shirazi's proposed temporal-coding interpretation.

%==============================================================================
% Declarations (Required by Springer)
%==============================================================================
\section*{Declarations}

\begin{itemize}[leftmargin=*]
    \item \textbf{Funding:} The authors declare that no funds, grants, or other support were received during the preparation of this manuscript.
    \item \textbf{Conflict of interest/Competing interests:} The authors have no relevant financial or non-financial interests to disclose.
    \item \textbf{Ethics approval:} Not applicable.
    \item \textbf{Consent to participate:} Not applicable.
    \item \textbf{Consent for publication:} Not applicable.
    \item \textbf{Availability of data and materials:} No external experimental dataset was used. All reported results were generated from synthetic stimuli and numerical simulations. The resulting simulation outputs are available from the corresponding author upon reasonable request.
    \item \textbf{Code availability:} The custom simulation code, parameter configurations, and scripts used to generate the reported results and figures are available from the corresponding author upon reasonable request.
    \item \textbf{Authors' contributions:} A.M., M.S., F.S., and H.P. conceptualized the study. A.M. and M.S. implemented the computational models and performed the simulations. F.S. contributed to the theoretical framework analysis. H.P. supervised the project. All authors contributed to writing and reviewing the manuscript.
\end{itemize}

%==============================================================================
% Appendix
%==============================================================================
\begin{appendices}
\section{Proof of Clique Energy Emulation}
\label{app:proof}

The SCI clique energy in Eq.~\eqref{eq:esci} appears to require direct access to the abstract SCI states $\Xlow$ and $\Xup$. A neuromorphic implementation should avoid such external reference signals. The following lemma states how local hypothesis-testing cells can emulate the same energy contribution through lateral interactions.

\begin{lemma}
\label{lem:emulation}
Let $z^{Q}_{C'}(n)$ and $z^{A}_{C'}(n)$ denote the activity levels of the quenching cell $Z^{Q}_{C'}$ and activating cell $Z^{A}_{C'}$ at iteration $n$, and let $\Lambda_z>0$ be their maximum activity level. Then the SCI clique energy can be represented locally as
\begin{align}
E(x_{ij}=\Xup,\xtilde_{C'}(n);\gamma(n))
&=
-\gamma(n)\frac{z^{Q}_{C'}(n)}{\Lambda_z},
\label{eq:lem1}
\\
E(x_{ij}=\Xlow,\xtilde_{C'}(n);\gamma(n))
&=
-\gamma(n)\frac{z^{A}_{C'}(n)}{\Lambda_z}.
\label{eq:lem2}
\end{align}
\end{lemma}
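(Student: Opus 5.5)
The plan is to reduce the lemma to a case check on the binary neighbor state $\xtilde_{C'}(n)\in\{\Xlow,\Xup\}$, using the definition of $E$ in Eq.~\eqref{eq:esci} together with the indicator $I$ of Eq.~\eqref{eq:identity}.

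The statement does not define the quenching and activating cells, so I first fix their semantics as the neuromorphic interpretation of Section~\ref{sec:neuromorphic} suggests. The cell $Z^{Q}_{C'}$ and the cell $Z^{A}_{C'}$ are the hypothesis-testing cells attached to the neighboring site(s) in $C'$. Because Eq.~\eqref{eq:lem1} concerns the active hypothesis $\Xup$, the quenching cell must play the role of reporting that the neighbor sits in the active state. The binary-output assumption is then
\[
z^{Q}_{C'}(n)=\Lambda_z\, I\big(\Xup,\xtilde_{C'}(n)\big),\qquad z^{A}_{C'}(n)=\Lambda_z\, I\big(\Xlow,\xtilde_{C'}(n)\big).
\]
In words, each cell is either silent or fires at its maximal level $\Lambda_z$, and it fires exactly when the neighbor's previous-iteration estimate equals the corresponding state. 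I would justify this from the winner-take-all comparison $U_{\mathrm{a}}$ versus $U_{\mathrm{s}}$ in Algorithm~\ref{alg:inference}. That comparison yields exactly one of two mutually exclusive outcomes per site, which gives $z^{Q}_{C'}+z^{A}_{C'}=\Lambda_z$ and $z^{Q}_{C'}z^{A}_{C'}=0$.

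Given this, the proof is a direct substitution in two cases. If $\xtilde_{C'}(n)=\Xup$, then $E(\Xup,\xtilde_{C'};\gamma(n))=-\gamma(n)$ and $E(\Xlow,\xtilde_{C'};\gamma(n))=0$ by Eq.~\eqref{eq:esci}. At the same time $z^{Q}/\Lambda_z=1$ and $z^{A}/\Lambda_z=0$, so both Eq.~\eqref{eq:lem1} and Eq.~\eqref{eq:lem2} hold. The case $\xtilde_{C'}(n)=\Xlow$ is symmetric. Since $\Lambda_z>0$, the division is legitimate, and the right-hand sides use only the local activities and a global constant, with no reference to the abstract values $\Xlow,\Xup$. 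This gives the \emph{local} representation. For a site with several SCI cliques, summing over $C\in\Cset^X_{ij}$ then reproduces the full regularization term in Eq.~\eqref{eq:update}, clique by clique.

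The main obstacle is the first step. The lemma as stated is only true under an identification of the cell activities with the indicator of the neighbor's state, and the paper does not define this identification explicitly. If the cells' activity is graded rather than saturating, Eq.~\eqref{eq:lem1} holds only approximately. I would therefore state the saturation/binary-output property as the defining property of $Z^{Q}$ and $Z^{A}$. I would also note that the naming ``quenching'' versus ``activating'' must be matched so that the cell feeding the $\Xup$ hypothesis is the one active when the neighbor is in $\Xup$. If the intended convention is the reverse, the same argument goes through with the roles of $z^{Q}$ and $z^{A}$ swapped in the definition.
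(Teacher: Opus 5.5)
Your proposal is correct and follows essentially the same route as the paper. The paper's proof also does a case check on whether $\xtilde_{C'}(n)=\Xup$ using Eq.~\eqref{eq:esci}, and it also assumes a winner-take-all pair with $z^{Q}_{C'}(n)=\Lambda_z$ exactly when the neighbor is in $\Xup$ and $z^{A}_{C'}(n)=\Lambda_z$ otherwise, which is the identification you state explicitly as the defining property of the cells. The one difference is that the paper simply posits the winner-take-all behavior rather than deriving it from the $U_{\mathrm{a}}$ versus $U_{\mathrm{s}}$ comparison, so your extra justification is harmless.
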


\begin{proof}
From Eq.~\eqref{eq:esci}, the SCI energy is
\[
E(x_{ij},\xtilde_{C'}(n);\gamma(n))
=
-\gamma(n)I(x_{ij},\xtilde_{C'}(n)).
\]
Therefore,
\[
E(x_{ij}=\Xup,\xtilde_{C'}(n);\gamma(n))
=
\begin{cases}
-\gamma(n), & \xtilde_{C'}(n)=\Xup,\\
0, & \xtilde_{C'}(n)\ne\Xup.
\end{cases}
\]
In the hypothesis-testing circuit, the quenching and activating cells are assumed to form a winner-take-all pair. Thus, at a stable local decision, either
\[
z^Q_{C'}(n)=\Lambda_z,\qquad z^A_{C'}(n)=0,
\]
or
\[
z^Q_{C'}(n)=0,\qquad z^A_{C'}(n)=\Lambda_z.
\]
If the neighboring SCI state corresponds to $\Xup$, the quenching cell represents that local state and $z^Q_{C'}(n)=\Lambda_z$. Hence
\[
-\gamma(n)\frac{z^Q_{C'}(n)}{\Lambda_z}
=
-\gamma(n).
\]
If the neighboring state does not correspond to $\Xup$, then $z^Q_{C'}(n)=0$, and the same expression becomes zero. This proves Eq.~\eqref{eq:lem1}. The proof of Eq.~\eqref{eq:lem2} is identical after exchanging the roles of the activating and quenching cells.
\end{proof}
\end{appendices}

%==============================================================================
% References
%==============================================================================

\end{document}